\numberwithin{equation}{section}
\newcommand\caA{{\mathcal A}}
\newcommand\caI{{\mathcal I}}
\newcommand\caL{{\mathcal L}}
\newcommand\caT{{\mathcal T}}
\newcommand\caS{{\mathcal S}}
\newcommand\caV{{\mathcal V}}
\newcommand\wx{{\widetilde x}}
\newcommand\gR{{\mathbb R}}
\newcommand\gN{{\mathbb N}}
\newcommand\eps{{\varepsilon}}
\newcommand\dd{{\text{\textup{d}}}}
\newcommand\norm{\mathord{\parallel}}
\newtheorem{Theorem}{Theorem}[section]
\newtheorem{theorem}[Theorem]{Theorem}
\newtheorem{Lemma}[Theorem]{Lemma}
\newtheorem{definition}[Theorem]{Definition}
\newtheorem*{feynman}[Theorem]{Feynman rules:}
\theoremstyle{nonumberplain}
\newtheorem{proof}{Proof}
\title{Renormalization of the commutative scalar theory with harmonic term to all orders}
\author{Axel de Goursac}
\date{}
\begin{document}

\maketitle
\vspace*{-1cm}
\begin{center}
\textit{Charg\'e de Recherche au F.R.S.-FNRS\\ IRMP, Universit\'e Catholique de Louvain,\\ Chemin du Cyclotron, 2, B-1348 Louvain-la-Neuve, Belgium,}\\
\textit{and Max Planck Institut f\"ur Mathematik,\\ Vivatsgasse 7, D-53111 Bonn, Germany\\
  e-mail: \texttt{axelmg@melix.net}}\\
\end{center}%

\vskip 2cm

\begin{abstract}
The noncommutative scalar theory with harmonic term (on the Moyal space) has a vanishing beta function. In this paper, we prove the renormalizability of the commutative scalar field theory with harmonic term to all orders by using multiscale analysis in the momentum space. Then, we consider and compute its one-loop beta function, as well as the one on the degenerate Moyal space. We can finally compare both to the vanishing beta function of the theory with harmonic term on the Moyal space.
\end{abstract}

\vfill

\pagebreak

\section{Introduction}

Noncommutative quantum field theories appear to be interesting candidates for new Physics beyond the Standard Model of particles (see \cite{Douglas:2001ba,Wulkenhaar:2006si} for a review). They are based on ``spaces'' construced from Noncommutative Geometry \cite{Connes:1994} and involve new features concerning renormalizability, vacuum configurations,... The recent discovery at the LHC of a new particle, which could be the Higgs boson, reinforces the interest for scalar theories in general, and especially the interpretation of the Higgs field as part of a noncommutative connection (i.e. a noncommutative gauge potential): see \cite{DuboisViolette:1988ir,Connes:1990lo} for almost-commutative geometries and \cite{deGoursac:2008bd,Cagnache:2008tz} for the Moyal geometry.
\medskip

In 2004, the $\phi^4$ scalar field theory on the (noncommutative) Moyal space was showed to be renormalizable at all orders in perturbation \cite{Grosse:2004yu} only if a harmonic term $\int x^2\phi^2(x)$ was added to the action. Otherwise, a new divergence, called UV-IR mixing \cite{Minwalla:1999px}, spoils the renormalizability. Then, numerous proofs \cite{Rivasseau:2005bh,Gurau:2005gd,Gurau:2007fy,Tanasa:2007xa} of the renormalizability of this action as well as studies of the renormalization flow \cite{Grosse:2004by,Disertori:2006nq} showed that this theory is asymptotically safe and therefore involves a new renormalization group. Even more, it seems to be a non-perturbatively solvable model \cite{Grosse:2012uv}. This theory has other remarkable properties, like a new symmetry called Langmann-Szabo duality \cite{Langmann:2002cc} and interpreted as a grading symmetry \cite{deGoursac:2010zb}. The vacuum solutions of the theory with a negative mass term ($m^2<0$) have been exhibited in \cite{deGoursac:2007uv} and are not constant. The noncommutative Noether currents have been computed in \cite{Hounkonnou:2009qt,Geloun:2007zz}. Moreover, even if the choice of a deformation structure $\Theta$ breaks the rotational invariance of the model, it can be restored at all orders in the renormalization procedure \cite{deGoursac:2009fm}. Finally, let us mention that the commutative limit of this theory is not well-defined, and this will be partly discussed in the present paper.

Note that there are now other renormalizable theories on the Moyal space. For instance, the complex scalar LSZ-model \cite{Langmann:2003if} and the Gross-Neveu fermionic model \cite{VignesTourneret:2006nb}. Another renormalizable real scalar model on the Moyal space has been exhibited \cite{Gurau:2008vd}, in which the non-local IR counterterm $\int\frac{1}{p^2}\hat\phi(p)\hat\phi(-p)$ is now included in the classical action. The resulting theory is translation-invariant, but does not possess the properties exposed above for the Grosse-Wulkenhaar model (see also \cite{Geloun:2008hr,Magnen:2008pd,Blaschke:2008yj,Blaschke:2009gm}). Some new features can also be found in other noncommutative scalar theories \cite{Pinzul:2011di,Liang:2010zza}. A noncommutative gauge theory involving a harmonic term has been constructed in \cite{deGoursac:2007gq,Grosse:2007dm}, which is therefore strongly related to the Grosse-Wulkenhaar model. It admits only non-trivial vacuum configurations \cite{deGoursac:2008rb} and is a good candidate to renormalizability. The BRST framework of this gauge model has been investigated in \cite{Blaschke:2007vc,Blaschke:2009aw}.
\medskip

Besides, a recent paper \cite{Wulkenhaar:2009pv} has showed that the $\phi^4$ scalar field theory with harmonic term but on the usual commutative space $\gR^4$ emerges from the spectral action of a supersymmetric spectral triple. The investigation of its (non-constant) vacuum configurations led to a new interpretation of the Higgs mechanism which gives mass to the fermionic and gauge fields. The phase transition of the spontaneous symmetry breaking mechanism depends on the configuration coordinate $x$ in this theory, as well as on the mass flow. Nonetheless, the question of renormalizability of the theory was not taken into account in this analysis.
\medskip

The aim of this paper is first to renormalize to all orders the commutative $\phi^4$ scalar field theory with harmonic term, whose interest has been underlined just above, by using multiscale analysis in the momentum space as it has been done for noncommutative field theories but in the position space \cite{Rivasseau:2005bh,Gurau:2005gd}. Then, it makes sense to compute the (one-loop) beta function and to compare it with the one of the Moyal space \cite{Grosse:2004by} and the one of the degenerate Moyal space. We want indeed to describe how the commutative limit is pathologic at the level of beta functions, so that the asymptotic safety is not preserved.


\section{Renormalization of the commutative field theory with harmonic term}
\label{sec-ren}

\subsection{Presentation of the theory}

The scalar theory with harmonic term on the (Euclidean) commutative space $\gR^D$ is given by:
\begin{equation}
S(\phi)=\int \dd^Dx\Big(\frac 12(\partial_\mu\phi)^2 +\frac{\Omega^2}{2}x^2\phi^2 +\frac{m^2}{2}\phi^2 +\lambda\phi^4\Big),\label{eq-com-actharm}
\end{equation}
where $\Omega$ is a real parameter of mass dimension 2, and the commutative pointwise product has been used. The propagator of this theory is given by:
\begin{align}
C(x,y)&=\left(\frac{\Omega}{2\pi}\right)^{\frac D2}\int_0^\infty \!\! \dd t\ C(t,x,y),\nonumber\\
C(t,x,y) &= \frac{e^{-m^2t}}{\sinh^{\frac D2}(2\Omega t)}\exp\Big(-\frac{\Omega}{4}\coth(\Omega t)(x-y)^2 -\frac{\Omega}{4}\tanh(\Omega t)(x+y)^2\Big).\label{eq-propag3}
\end{align}
Its Fourier transform is given by
\begin{align}
\hat C(p,q)&=\left(\frac{2\pi}{\Omega}\right)^{\frac D2}\int_0^\infty \!\! \dd t\ \hat C(t,p,q),\nonumber\\
\hat C(t,p,q) &= \frac{e^{-m^2t}}{\sinh^{\frac D2}(2\Omega t)} \exp\Big(-\frac{1}{4\Omega}\coth(\Omega t)(p+q)^2 -\frac{1}{4\Omega}\tanh(\Omega t)(p-q)^2\Big),\label{eq-propag4}
\end{align}
where we use the convention $\hat f(p)=\int\dd^Dx\ f(x)e^{-ipx}$ in this paper.

The mass dimension of the parameter $\Omega$ is an indication of the renormalizability of this theory, but it does not prove wether this theory is stable under renormalization or not (see section \ref{sec-disc}). That is why we study its renormalizability with multiscale analysis in the following.

\subsection{Power-counting}

We use below the multiscale analysis \cite{Rivasseau:1991,VignesTourneret:2006xa}. For that, we consider the regularization of the Fourier transform of the propagator:
\begin{equation*}
\hat C_\rho(p,q)=\left(\frac{2\pi}{\Omega}\right)^{\frac D2}\int_{M^{-2\rho}}^\infty \!\! \dd t\ \hat C(t,p,q)
\end{equation*}
where $M> 1$ is a fixed number in the following and $\rho\in\gN$ plays the role of an ultraviolet cut-off. Then, we cut this regularized propagator into slices:
\begin{align}
&\hat C_\rho(p,q)=\sum_{i=0}^\rho\hat C^i(p,q),\label{eq-slice}\\
&\hat C^0(p,q)=\left(\frac{2\pi}{\Omega}\right)^{\frac D2}\int_{1}^\infty \!\! \dd t\ \hat C(t,p,q),\qquad \hat C^i(p,q)=\left(\frac{2\pi}{\Omega}\right)^{\frac D2}\int_{M^{-2i}}^{M^{-2(i-1)}} \!\! \dd t\ \hat C(t,p,q)\nonumber
\end{align}
for $i\in\{1,\dots,\rho\}$. $i$ is called the {\bf scale} or the index of the propagator $\hat C^i$.
\medskip

\begin{feynman}
Let $G$ be an amputated Feynman graph of this theory. We denote by $\caI(G)$ the set of internal lines of $G$, $I(G)$ the cardinal of $\caI(G)$, $\caV(G)$ the set of vertices of $G$, $n(G)$ the cardinal of $\caV(G)$, $N(G)>0$ its number of external legs and $L(G)$ its number of loops. The parity of the theory implies that only the amplitudes of graphs for even $N$ are non-vanishing. Due to the Euler characteristic and to the $\phi^4$ theory, we also have the classical identities:
\begin{equation}
L(G)=I(G)-n(G)+1,\qquad 4n(G)=2I(G)+N(G)\label{eq-euler}
\end{equation}
where we assume that $G$ is connected in the first identity. Finally, for a vertex $\nu\in\caV(G)$, we define $\caL_\nu(G)$ to be the set of internal or external lines $\ell$ of $G$ hooked to $\nu$. The Feynman rules are as follows. For each internal line $\ell$ with incoming impulsions $p_\ell$ and $q_\ell$ at each boundary, $(2\pi)^D\int\dd^Dp_\ell\dd^D q_\ell\ \hat C_\rho(p_\ell,q_\ell)$ contributes, while the contribution for each vertex $\nu$ is $\frac{\lambda}{(2\pi)^{3D}}\delta(\sum_{\ell\in\caL_\nu(G)}p_\ell)$ where the $p_\ell$'s are the four incoming (internal or external) impulsions to this vertex $\nu$.
\end{feynman}
Due to these Feynman rules, we can write the regularized amplitude $A_G$ of the graph $G$:
\begin{equation}
A_G=\frac{\lambda^{n(G)}}{(2\pi)^{D(3n(G)-I(G))}}\int \prod_{\ell\in\caI(G)}\dd^Dp_\ell\dd^Dq_\ell\ \hat C_\rho(p_\ell,q_\ell)\ \Delta(\{p_\ell,q_\ell,k_e\})\label{eq-ampl}
\end{equation}
where the distribution $\Delta$ depends on all (internal and external) impulsions and summarize all the delta functions of each vertex. Note that $A_G$ is a distribution on the external impulsions $k_e$ where we take the convention in the following that the $k_e$'s are incoming ($e\in\{1,\dots,N(G)\}$). By using the decomposition of the propagator into slices \eqref{eq-slice}, the amplitude will be a sum over the scales $i_\ell$ of every internal lines $\ell$. We call $\mu=\{i_\ell\}_{\ell\in\caI(G)}\in\{0,\dots,\rho\}^{I(G)}$ an {\bf attribution} of the graph $G$. Then
\begin{equation}
A_G=\frac{\lambda^{n(G)}}{(2\pi)^{D(3n(G)-I(G))}}\sum_\mu A_G^\mu,\quad\text{ with } A_G^\mu=\int \prod_{\ell\in\caI(G)}\dd^Dp_\ell\dd^Dq_\ell\ \hat C^{i_\ell}(p_\ell,q_\ell)\ \Delta(\{p_\ell,q_\ell,k_e\}).\label{eq-amplattr}
\end{equation}
Now we introduce a key concept in multiscale analysis.
\begin{definition}
\label{def-quasilocal}
For a given attribution $\mu$ and $i\in\{0,\dots,\rho\}$, we consider the subgraph of $G$ composed of the lines $\ell\in\caI(G)$ with index $i_\ell\geq i$ and vertices hooked to these lines. It is not connected in general. We denote by $G^i_k$ the connected components (indexed by $k$) of this subgraph, and call them the {\bf quasilocal subgraphs} of $G$. By convention, $G$ is also a quasilocal subgraph and the indices of its external legs are set to -1. This name is justified by the fact that all indices of internal lines of a $G^i_k$ are greater or equal to $i$ while all indices of external lines of $G^i_k$ (but internal lines in $G$) are smaller than $i$, so $G^i_k$ has a smaller spatial extension than its external legs.
\end{definition}

We first want to have a bound for the propagator in a slice.
\begin{Lemma}
\label{lem-prop}
The Fourier transform of the propagator admits the following bound: there exists constants $K,k,k'>0$ such that for any $i\in\{0,\dots,\rho\}$, $\forall p,q\in\gR^D$,
\begin{equation*}
|\hat C^i(p,q)|\leq K M^{(i+1)(D-2)}e^{-kM^{2(i+1)}(p+q)^2-k'M^{-2(i+1)}(p-q)^2}
\end{equation*}
where $k=\frac{1}{4\Omega}$.
\end{Lemma}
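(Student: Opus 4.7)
The plan is to estimate $|\hat C^i(p,q)|$ directly from the integral representation \eqref{eq-propag4} by combining elementary analytic inequalities for $\sinh,\coth,\tanh$ with the length of the $t$-integration interval, treating $i\geq 1$ and $i=0$ separately.

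For $i\geq 1$, the slice $t\in[M^{-2i},M^{-2(i-1)}]$ has length bounded by $M^{-2(i-1)}$. The inequality $\sinh(2\Omega t)\geq 2\Omega t$ together with $t\geq M^{-2i}$ yields $1/\sinh^{D/2}(2\Omega t)\leq (2\Omega)^{-D/2}M^{iD}$; multiplying by the slice length produces the claimed prefactor of order $M^{(i+1)(D-2)}$, up to a multiplicative constant absorbed into $K$.

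For the Gaussian decay, I would bound the two exponents from below uniformly on the slice. Using $\coth(\Omega t)\geq 1/(\Omega t)$ with $t\leq M^{-2(i-1)}$, and $\tanh(\Omega t)\geq c\,\Omega t$ (valid since $\Omega t$ remains bounded on every $i\geq 1$ slice) with $t\geq M^{-2i}$, one obtains
\[
\frac{\coth(\Omega t)}{4\Omega}\geq \frac{M^{2(i-1)}}{4\Omega^2},\qquad
\frac{\tanh(\Omega t)}{4\Omega}\geq \frac{c\,M^{-2i}}{4}.
\]
These can be rewritten as $k\,M^{2(i+1)}$ and $k'\,M^{-2(i+1)}$ for suitable positive constants, absorbing the residual factors of $M^{\pm 4}$.

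The step requiring the most care, and the only place one might stumble, is the boundary slice $i=0$ where $t\in[1,\infty)$: the bounds $\coth(\Omega t)\geq 1$ and $\tanh(\Omega t)\geq \tanh(\Omega)$ give only constant-order coefficients, so one must rely instead on the exponential decay $e^{-m^2 t}/\sinh^{D/2}(2\Omega t)\sim e^{-(m^2+D\Omega)t}$ at infinity to ensure finiteness of the $t$-integral. Once this is handled, merging its constants with those from the $i\geq 1$ analysis produces a single triple $(K,k,k')$ valid uniformly; the specific value $k=1/(4\Omega)$ quoted in the statement should be read as the natural momentum-space scale, the proof actually furnishing a constant of order $1/(4\Omega^2 M^4)$ that is subsequently absorbed into $k$.
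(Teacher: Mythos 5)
Your proposal is correct and follows essentially the same route as the paper: treat $i\geq 1$ by bounding $\sinh$, $\coth$, $\tanh$ linearly on the slice $[M^{-2i},M^{-2(i-1)}]$ and multiplying by the slice length, and handle $i=0$ separately with constant-order bounds, absorbing all residual $M^{\pm4}$ and $\Omega$ factors into $(K,k,k')$. The only (harmless) differences are that you invoke the decay of $e^{-m^2t}/\sinh^{D/2}(2\Omega t)$ at infinity where the paper simply bounds $\sinh(2\Omega t)\geq\sinh(2\Omega)$ and uses $e^{-m^2t}$, and your reading of the quoted value $k=\tfrac{1}{4\Omega}$ as a constant to be adjusted matches what the paper's own proof actually produces.
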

\begin{proof}
For $i=0$, one integrates $t$ over $[1,+\infty)$. One easily obtains the bounds $\sinh(2\Omega t)\geq \sinh(2\Omega)$, $\exp\Big(-\frac{1}{4\Omega}\coth(\Omega t)(p+q)^2\Big)\leq \exp\Big(-\frac{1}{4\Omega}(p+q)^2\Big)$ and $\exp\Big(-\frac{1}{4\Omega}\tanh(\Omega t)(p-q)^2\Big)\leq \exp\Big(-\frac{1}{4\Omega}\tanh(\Omega)(p-q)^2\Big)$, which gives the result.

For $i\geq 1$, we have to integrate $t$ over $[M^{-2i},M^{-2(i-1)}]$. By using Taylor inequality, there exists constants $\alpha,\beta,\gamma>0$ independent of $i$ such that $\forall t\in]0,1]$, $\sinh(t)\geq\alpha t$, $\coth(t)\geq \frac{1}{\beta t}$ and $\tanh(t)\geq\gamma t$. Inserting these inequalities in \eqref{eq-propag4} produces:
\begin{equation*}
|\hat C^i(p,q)|\leq \left(\frac{2\pi}{\Omega}\right)^{\frac D2} (M^{-2(i-1)}-M^{-2i})\frac{M^{Di}}{(2\alpha\Omega)^{\frac D2}} \exp\Big(-\frac{M^{2(i-1)}}{4\beta\Omega^2}(p+q)^2 -\frac{M^{-2i}}{4\gamma}(p-q)^2\Big)
\end{equation*}
We set $K=\left(\frac{\pi}{\alpha\Omega^2}\right)^{\frac D2}(M^2-1)M^{2-D}$, $k=\frac{1}{4\beta\Omega^2 M^4}$, $k'=\frac{M^2}{4\gamma}$, and obtain the result.
\end{proof}

\begin{theorem}[Power counting]
\label{thm-powcount}
Let $G$ be a connected amputated graph of the theory \eqref{eq-com-actharm}. Then there exists a constant $K>0$ such that for any attribution $\mu$ of $G$, $\forall \varphi_j\in\caS(\gR^D)$,
\begin{multline*}
\int\dd^Dk_1\dots\dd^Dk_N\ A_G^\mu(k_1,\dots,k_N)\hat\varphi_1(k_1)\dots\hat\varphi_N(k_N)\leq\\
K\norm\hat\varphi_1\norm_1\dots \norm\hat\varphi_{N-2}\norm_1\norm\hat\varphi_{N-1}\norm_2\norm\hat\varphi_N\norm_2\prod_{(i,k)}M^{-\omega(G^i_k)},
\end{multline*}
where $G^i_k$ are the quasilocal subgraphs of $G$ and the {\bf superficial degree of divergence} is given by
\begin{equation}
\omega(G)=(4-D)n(G)+\frac12(D-2)N(G)-D.\label{eq-supdeg}
\end{equation}
\end{theorem}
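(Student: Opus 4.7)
The approach is a standard momentum-space multi-scale analysis, patterned on the position-space arguments used for the noncommutative Grosse--Wulkenhaar model in \cite{Rivasseau:2005bh,Gurau:2005gd}. The plan has three stages.

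First, I would apply Lemma~\ref{lem-prop} to bound $|\hat C^{i_\ell}(p_\ell,q_\ell)|$ on each internal line, and substitute $u_\ell = p_\ell+q_\ell$ and $v_\ell = p_\ell-q_\ell$: the ``short'' mode $u_\ell$ is concentrated at scale $M^{-(i_\ell+1)}$, while the ``long'' mode $v_\ell$ has typical scale $M^{i_\ell+1}$. After the change of variables, the vertex deltas packaged in $\Delta$ become $n(G)$ linear relations in the $u$'s, $v$'s and external $k_e$'s---the unsigned incidence matrix of $G$ multiplying the $u$-part and the signed incidence matrix multiplying the $v$-part.

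Second, I would choose a spanning tree $T$ of $G$ compatible with the quasilocal structure, i.e.\ such that $T\cap G^i_k$ is a spanning tree of every quasilocal subgraph $G^i_k$; such a $T$ exists because the $G^i_k$'s are nested under inclusion (Gallavotti-Nicol\`o structure). Then use the $n(G)$ vertex deltas to eliminate the tree-line $u_\ell$'s (plus one additional combination of momenta) in favor of the loop-line $u_\ell$'s, all the $v_\ell$'s and the external $k_e$'s. The remaining Gaussian integrations produce $CM^{-D(i_\ell+1)}$ for each free loop-line $u_\ell$ (tight Gaussian) and $CM^{D(i_\ell+1)}$ for each $v_\ell$ (wide Gaussian); the tree-line Gaussians in $u_\ell$ are bounded trivially by $1$. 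A residual Gaussian damping in the external combination $\sum_e k_e + \sum_\ell u_\ell$ is then absorbed against the test functions $\hat\varphi_e$: for $N-2$ of them the trivial bound $|\hat\varphi_e|\leq|\hat\varphi_e|$ produces the $\|\hat\varphi_e\|_1$ norms, and for the remaining two Young's inequality produces $\|\hat\varphi_{N-1}\|_2\|\hat\varphi_N\|_2$ together with an extra $M^{-D}$ per scale from the momentum-conservation damping.

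Third, I would reorganize the accumulated powers of $M$ over the quasilocal subgraphs. Since a line $\ell$ lies in $G^i_k$ for exactly $i_\ell+1$ values of $i$, one has $\prod_\ell M^{\alpha(i_\ell+1)} = \prod_{(i,k)} M^{\alpha\, I(G^i_k)}$, and compatibility of $T$ with the $G^i_k$'s gives the analogous telescoping for the loop-line contributions. Gathering all factors at a fixed pair $(i,k)$ yields an exponent of the form $(D-2)\,I(G^i_k) + D\,L(G^i_k) - D$, which rearranges via the Euler identities \eqref{eq-euler} into exactly $-\omega(G^i_k)$ as defined in \eqref{eq-supdeg}.

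The main obstacle is the second step: the vertex deltas mix $u$'s and $v$'s, so the order in which one solves the constraints must be synchronized with the nesting of the $G^i_k$'s---integrating from the innermost subgraph outward---to ensure that the tree-line Gaussians discarded as $\leq 1$ belong to the correct quasilocal subgraph at the right scale. Once this bookkeeping is in place, the rest is dimensional accounting via \eqref{eq-euler}, and the uniformity of the constants $K$, $k$, $k'$ from Lemma~\ref{lem-prop} makes the prefactor $K$ in the final bound independent of the attribution $\mu$.
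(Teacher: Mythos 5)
Your overall architecture (the slice bound of Lemma~\ref{lem-prop}, the variables $u_\ell=p_\ell+q_\ell$, $v_\ell=p_\ell-q_\ell$, a spanning tree compatible with the quasilocal subgraphs, and the reorganization $\prod_\ell M^{\alpha(i_\ell+1)}=\prod_{(i,k)}M^{\alpha I(G^i_k)}$) is the same as the paper's, but the central elimination step is done backwards, and with it the power bookkeeping fails. You propose to spend the vertex delta functions on the \emph{tree-line $u_\ell$'s}, bound the corresponding tight Gaussians by $1$, and then pay $M^{D(i_\ell+1)}$ for \emph{every} $v_\ell$ integration. The whole point of the decomposition is the opposite: the prefactor $M^{(i_\ell+1)(D-2)}$ of Lemma~\ref{lem-prop} must be beaten down by integrating the tight Gaussian in $u_\ell$ on \emph{every} internal line (each such integration yields $M^{-D(i_\ell+1)}$, i.e.\ a net $M^{-2(i_\ell+1)}$ per line), while the deltas are used to fix the \emph{tree-line $v_\ell$'s} (the tree being chosen with highest possible indices), so that the expensive factor $M^{D(i_\ell+1)}$ is paid only for the $L(G^i_k)$ loop lines. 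This gives the exponent $-2I(G^i_k)+D\,L(G^i_k)$ per quasilocal subgraph, which equals $-\omega(G^i_k)$ by \eqref{eq-euler}; in particular the constant $-D$ in \eqref{eq-supdeg} comes from the ``$+1$'' in $L=I-n+1$, not from any ``extra $M^{-D}$ per scale from momentum-conservation damping.'' After solving the tree $v$'s there remains only the single overall delta \eqref{eq-overalldelta}, used once to evaluate $k_N$, and Cauchy--Schwarz on the $k_{N-1}$ integration produces the two $L^2$ norms; no per-scale gain is extracted there.

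With your choice the accounting does not close, and is in fact internally inconsistent: your stage-2 rules give a per-subgraph exponent $(D-2)I(G^i_k)+D\,I(G^i_k)-D\,L(G^i_k)$ (prefactor, all $v$'s, loop $u$'s), your stage 3 asserts $(D-2)I(G^i_k)+D\,L(G^i_k)-D$, and neither reduces to $-\omega(G^i_k)$ via \eqref{eq-euler}. Concretely, for the one-loop four-point bubble in $D=4$ ($n=2$, $I=2$, $L=1$, $N=4$, so $\omega=0$) your stage-2 exponent is $+8$ and your stage-3 exponent is $+4$, instead of $0$: the bound you would obtain diverges badly with the scale and cannot give the theorem. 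One could in principle try to salvage the tree-$u$ elimination by \emph{keeping} the tree-line tight Gaussians (which then constrain linear combinations of loop $u$'s, $v$'s and external momenta), but that is a much more delicate argument than what you wrote, it is not needed, and it is not what the paper does. The fix is simply to exchange the roles of $u$ and $v$ in your step two; the rest of your outline (compatible tree, subgraph-wise telescoping, $L^1$/$L^2$ treatment of the test functions) then matches the paper's proof.
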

\begin{proof}
We use the expression \eqref{eq-amplattr} of $A_G^\mu$ to bound
\begin{equation*}
\caA_G^\mu=\int\dd^Dk_1\dots\dd^Dk_N\ A_G^\mu(k_1,\dots,k_N)\hat\varphi_1(k_1)\dots\hat\varphi_N(k_N)
\end{equation*}
where $\varphi_j\in\caS(\gR^D)$, i.e. $\varphi_j$ are Schwartz functions (smooth and rapidly decresing at infinity), and we recall that $k_e$ are the external impulsions. First, let us solve $\Delta$. We choose a spanning rooted\footnote{spanning means that the tree $\caT$ reaches every vertex of $G$; rooted: we fix a vertex to be the root of the tree.} tree $\caT$ of the graph $G$ and perform the following change of variables: $u_\ell=p_\ell+q_\ell$ and $v_\ell=p_\ell-q_\ell$. Then, the $\Delta$ function allows to evaluate all the variables $v_\ell$ of the lines $\ell$ of the tree $\caT$, with rest:
\begin{equation}
\delta(2\sum_{\ell\in \caI(G)}u_\ell-2\sum_e k_e).\label{eq-overalldelta}
\end{equation}
Note that this rest depends on the variables $u_\ell$ because of the translation invariance breaking of the model. With this rest, one can evaluate the variable $k_{N(G)}=\sum_\ell u_\ell-\sum_{e<N}k_e$. By using Cauchy-Schwartz inequality, we obtain
\begin{equation*}
|\int\dd^Dk_{N-1}\ \hat\varphi_{N-1}(k_{N-1})\hat\varphi_N(\sum_\ell u_\ell-\sum_{e<N-1}k_e-k_{N-1})|\leq \norm \hat\varphi_{N-1}\norm_2\norm\hat\varphi_N\norm_2.
\end{equation*}
Due also to Lemma \ref{lem-prop}, we get the bound:
\begin{multline*}
|\caA_G^\mu|\leq K\norm\hat\varphi_1\norm_1\dots\norm\hat\varphi_{N-2}\norm_1\norm\hat\varphi_{N-1}\norm_2\norm\hat\varphi_N\norm_2 \int\left(\prod_{\ell\in\caI(G)}\dd^Du_\ell\,M^{(i_\ell+1)(D-2)}e^{-kM^{2(i_\ell+1)}u_\ell^2}\right)\\
\left(\prod_{\ell\in\caI(G)\setminus\caT} \dd^Dv_\ell\,e^{-k'M^{-2(i_\ell+1)}v_\ell^2}\right)
\end{multline*}
by neglecting some exponentials. Note also that $K>0$ denotes a new constant at each inequality for simplicity (depending only on $n(G)$ and $I(G)$). Next,
\begin{equation}
|\caA_G^\mu|\leq K\norm\hat\varphi_1\norm_1\dots \norm\hat\varphi_N\norm_2 \prod_{\ell\in\caI(G)}M^{-2(i_\ell+1)}\prod_{\ell\in\caI(G)\setminus\caT} M^{D(i_\ell+1)}\label{eq-ampl1}
\end{equation}
by performing the Gaussian integrals of the remaining variables.

We use the concept of quasilocal subgraphs $G^i_k$ introduced in Definition \ref{def-quasilocal}:
\begin{equation}
\prod_{\ell\in\caI(G)}M^{\alpha(i_\ell+1)}=\prod_{\ell\in\caI(G)}\prod_{j=0}^{i_\ell}M^{\alpha}=\prod_{\ell\in\caI(G)} \,\prod_{\substack{(i,k)|\\ \ell\in\caI(G^i_k)}}M^{\alpha}=\prod_{(i,k)}\prod_{\ell\in\caI(G^i_k)}M^{\alpha}=\prod_{(i,k)} M^{\alpha I(G^i_k)}.\label{eq-index1}
\end{equation}
To optimize our estimates, we choose the spanning tree $\caT$ which solves the function $\Delta$ to be with lines $\ell$ having highest indices $i_\ell$ as possible. This means that $\caT\cap\caI(G^i_k)$ is a spanning tree of $G^i_k$ and in particular that $\sharp(\caI(G^i_k)\setminus\caT)$ is equal to the number of loops $L(G^i_k)$. Next, like for \eqref{eq-index1},
\begin{equation}
\prod_{\ell\in\caI(G)\setminus\caT}M^{\alpha(i_\ell+1)}=\prod_{(i,k)}\prod_{\ell\in\caI(G^i_k)\setminus\caT}M^{\alpha}=\prod_{(i,k)} M^{\alpha L(G^i_k)}.\label{eq-index2}
\end{equation}
Inserting \eqref{eq-index1} and \eqref{eq-index2} into \eqref{eq-ampl1} and using the identity \eqref{eq-euler}, we find
\begin{equation*}
|\caA_G^\mu|\leq K\norm\hat\varphi_1\norm_1\dots \norm\hat\varphi_N\norm_2\prod_{(i,k)}M^{-\omega(G^i_k)},
\end{equation*}
where $\omega(G^i_k)$ is an integer (due to the parity of the theory) given by \eqref{eq-supdeg}.
\end{proof}

The expression of the superficial degree of divergence \eqref{eq-supdeg} of the theory implies that:
\begin{itemize}
\item if $D<4$, there is only a finite number of graphs $G$ (with parameters $(n(G),N(G))$) such that $\omega(G)\leq 0$, so which need to be renormalized. The theory will be superrenormalizable.
\item if $D=4$, there is an infinite number of graphs $G$ with $\omega(G)\leq 0$ but always with $N(G)=2$ or $N(G)=4$. The theory will be just renormalizable.
\item if $D>4$, there is an infinite number of graphs $G$ with $\omega(G)\leq 0$ and arbitrary $N(G)$. The theory is non renormalizable.
\end{itemize}
In the rest of this subsection and in the next one, we restrict to the case of $D\leq 4$ and we will show that the theory is renormalizable.
\begin{definition}
We call a connected graph $G$ {\bf strongly convergent} if $\forall H$ connected subgraph of $G$, $\omega(H)>0$.
\end{definition}

\begin{theorem}
\label{thm-powcount2}
Let $G$ be a connected amputated graph of the theory \eqref{eq-com-actharm} which is strongly convergent. Then there exists a constant $K>0$ such that $\forall \varphi_j\in\caS(\gR^D)$,
\begin{equation*}
|\int\dd^Dk_1\dots\dd^Dk_N\ A_G(k_1,\dots,k_N)\hat\varphi_1(k_1)\dots\hat\varphi_N(k_N)|\leq K\norm\hat\varphi_1\norm_1\dots \norm\hat\varphi_{N-2}\norm_1\norm\hat\varphi_{N-1}\norm_2\norm\hat\varphi_N\norm_2.
\end{equation*}
\end{theorem}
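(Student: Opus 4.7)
The plan is to sum the attribution-dependent bound of Theorem~\ref{thm-powcount} over all attributions $\mu\in\{0,\ldots,\rho\}^{I(G)}$ and to show that the total converges uniformly in the UV cut-off $\rho$. Applying Theorem~\ref{thm-powcount} and the triangle inequality to the decomposition $A_G=\sum_\mu A_G^\mu$ gives
$$|\caA_G|\leq K\,\norm\hat\varphi_1\norm_1\cdots\norm\hat\varphi_{N-2}\norm_1\norm\hat\varphi_{N-1}\norm_2\norm\hat\varphi_N\norm_2\cdot S(\rho),\qquad S(\rho):=\sum_\mu \prod_{(i,k)}M^{-\omega(G^i_k)},$$
so the whole task reduces to bounding $S(\rho)$ independently of $\rho$.

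I would first use the strong convergence hypothesis: since $\omega$ takes integer values on $\phi^4$ graphs (the parity argument invoked at the end of the proof of Theorem~\ref{thm-powcount}), the strict inequality $\omega(H)>0$ upgrades to $\omega(H)\geq 1$ for every connected subgraph $H$, and in particular for every quasilocal subgraph $G^i_k$ arising in any attribution. Hence $\prod_{(i,k)}M^{-\omega(G^i_k)}\leq \prod_{(i,k)}M^{-1}$.

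The main step, which I expect to be the only technically non-routine one, is to convert this per-subgraph decay into a per-line geometric decay. The key observation, already implicit in Definition~\ref{def-quasilocal}, is that a line $\ell$ with scale $i_\ell$ belongs to exactly $i_\ell+1$ quasilocal subgraphs, one at each scale $i=0,1,\ldots,i_\ell$. Writing $M^{-1}=\prod_{\ell\in\caI(G^i_k)}M^{-1/I(G^i_k)}$, exchanging the order of the products and using the elementary monotonicity $I(G^i_k)\leq I(G)$ (which gives $M^{-1/I(G^i_k)}\leq M^{-1/I(G)}$ since $M>1$), I obtain
$$\prod_{(i,k)}M^{-\omega(G^i_k)}\leq \prod_{(i,k)}\prod_{\ell\in\caI(G^i_k)}M^{-1/I(G^i_k)}=\prod_{\ell\in\caI(G)}\prod_{\substack{(i,k):\\ \ell\in G^i_k}}M^{-1/I(G^i_k)}\leq \prod_{\ell\in\caI(G)}M^{-(i_\ell+1)/I(G)}.$$

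Setting $c=1/I(G)>0$, the sum over attributions then factorizes into a product of convergent geometric series, which is the payoff of the rearrangement:
$$S(\rho)\leq \prod_{\ell\in\caI(G)}\sum_{i_\ell=0}^{\rho}M^{-c(i_\ell+1)}\leq \left(\frac{M^{-c}}{1-M^{-c}}\right)^{I(G)},$$
a bound independent of $\rho$. Combined with the preliminary inequality above, this yields the claim. In summary, Theorem~\ref{thm-powcount} plus strong convergence reduces the theorem to a combinatorial rearrangement of quasilocal-subgraph decays into per-line decays, and this is the step I would highlight as the central (but still elementary) obstacle.
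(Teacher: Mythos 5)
Your argument is correct, and it takes a genuinely different route from the paper. The paper does not use $\omega(H)\geq 1$ directly: it first sharpens the power counting to $\omega(H)\geq \tfrac14 N(H)$ for all connected subgraphs (valid for $D\leq 4$ via $N\leq 2n+2$), converts $\prod_{(i,k)}M^{-N(G^i_k)/4}$ into a product over vertices of $M^{-\frac14(e_\nu-i_\nu)}$ by counting external vertices of the quasilocal subgraphs, then uses $e_\nu-i_\nu\geq\frac18\sum_{\ell,\ell'\in\caL_\nu}|i_\ell-i_{\ell'}|$ and a rooted (non-spanning) tree $\caT'$ to extract a per-line factor $M^{-j_\ell/32}$ in \emph{relative} scale differences $j_\ell=|i_\ell-i_{\ell_\nu}|$, which is then summed geometrically. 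You instead keep only one unit of decay per quasilocal subgraph ($\omega\geq 1$, using integrality, which the paper itself asserts) and redistribute it over the lines of that subgraph, which, via the same counting identity as the paper's \eqref{eq-index1} (each line $\ell$ lies in exactly $i_\ell+1$ quasilocal subgraphs), yields decay $M^{-(i_\ell+1)/I(G)}$ in \emph{absolute} scales and a trivially factorized geometric sum. Both are valid here because the constant $K$ may depend on $G$, and your use of absolute scales is legitimate because the external legs of $G$ are by convention at scale $-1$ (fixed test functions), so the occurrences of $G$ itself at low scales supply the decay controlling the overall scale. What the paper's heavier route buys is a per-line decay rate ($1/32$) independent of the graph, hence constants growing only exponentially in $n(G)$, and decay expressed in relative scales; both features are what one needs for uniform/constructive estimates and for the renormalization step, where divergent subgraphs sit at arbitrary internal scales and only relative-scale decay survives. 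Your rate $1/I(G)$ degrades with the size of the graph, which is harmless for the fixed-graph statement of Theorem~\ref{thm-powcount2} but would not suffice for those stronger purposes.
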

\begin{proof}
Let $G$ be a connected strongly convergent graph of the theory. By use of Theorem \ref{thm-powcount}, we just have to sum over all the attributions $\mu$ (with $\rho\to\infty$). By a careful analysis on $\omega(G)$ for $D\in\{0,\dots,4\}$ and by using the identity $N(G)\leq 2n(G)+2$ (obtained from \eqref{eq-euler}), we can show that
\begin{equation}
\omega(G)\geq\frac14 N(G).\label{eq-supdeg2}
\end{equation}
The value $\frac14$ is the optimal one since it is reached for $(D=3,\,N=4,\,n=2)$. We then use Theorem \ref{thm-powcount} and follow the lines of the corresponding proof given in \cite{VignesTourneret:2006xa} which we reproduce here for selfcompleteness. For a vertex $\nu\in\caV(G)$, we set $e_\nu=\max_{\ell\in\caL_\nu(G)}i_\ell$ and $i_\nu=\min_{\ell\in\caL_\nu(G)}i_\ell$ to be the highest and the lowest indices of lines hooked to $\nu$. Then, for $i\in\{0,\dots,\rho\}$, we can see that $\nu\in\caV(G^i_k)$ is an external vertex\footnote{an external vertex is a vertex hooked at an external legs at least} of $G^i_k$ if and only if $i_\nu<i\leq e_\nu$. Since the number of external vertices is smaller than the number of external legs and due to \eqref{eq-supdeg2}, we have
\begin{multline*}
\prod_{(i,k)}M^{-\omega(G^i_k)}\leq \prod_{(i,k)}M^{-\frac14 N(G^i_k)}\leq \prod_{(i,k)}\prod_{\substack{\nu\in\caV(G^i_k)|\\i_\nu<i\leq e_\nu}}M^{-\frac14}=\prod_{\nu\in\caV(G)}\prod_{\substack{(i,k)|\\i_\nu<i\leq e_\nu}}M^{-\frac14}=\\
\prod_{\nu\in\caV(G)}M^{-\frac14(e_\nu-i_\nu)}
\end{multline*}
For the $\phi^4$-theory, we have $\forall\nu\in\caV(G)$, $e_\nu-i_\nu\geq\frac18\sum_{\ell,\ell'\in\caL_\nu(G)}|i_\ell-i_{\ell'}|$, which implies that
\begin{equation*}
\prod_{(i,k)}M^{-\omega(G^i_k)}\leq \prod_{\nu\in\caV(G)}\prod_{\ell,\ell'\in\caL_\nu(G)}M^{-\frac{|i_\ell-i_{\ell'}|}{32}}.
\end{equation*}
Finally, we choose another rooted tree $\caT'$ (not spanning) of $G$ such that every line of $G$ is hooked to a vertex of the tree. Therefore, for any $\ell\in\caI(G)$, there exists $\nu\in\caV(\caT')$ (we fix it) such that $\ell\in\caL_\nu(G)$. Moreover, there exists a unique line $\ell_\nu$ of the tree $\caT'$ hooked to $\nu$ and directed towards the root. We can perform the unambiguous change of variables: $j_\ell=|i_\ell-i_{\ell_\nu}|$. Then,
\begin{equation*}
\prod_{(i,k)}M^{-\omega(G^i_k)}\leq \prod_{\nu\in\caV(\caT')}\prod_{\ell,\ell'\in\caL_\nu(G)}M^{-\frac{|i_\ell-i_{\ell'}|}{32}}\leq \prod_{\ell\in\caI(G)} M^{-\frac{j_\ell}{32}}.
\end{equation*}
Allowing $\rho\to\infty$ and summing on the attributions $\mu=\{i_\ell\}_{\ell\in\caI(G)}$, we find
\begin{multline*}
|\sum_\mu \caA_G^\mu|\leq K\norm\hat\varphi_1\norm_1\dots \norm\hat\varphi_N\norm_2\sum_\mu \prod_{\ell\in\caI(G)} M^{-\frac{j_\ell}{32}}\\
\leq K\norm\hat\varphi_1\norm_1\dots \norm\hat\varphi_N\norm_2 \prod_{\ell\in\caI(G)}\sum_{j_\ell=0}^\infty M^{-\frac{j_\ell}{32}}\leq K\norm\hat\varphi_1\norm_1\dots \norm\hat\varphi_{N-2}\norm_1\norm\hat\varphi_{N-1}\norm_2\norm\hat\varphi_N\norm_2.
\end{multline*}
\end{proof}

\subsection{Renormalization}

In the above subsection, we have seen that a graph $G$ endowed with an attribution $\mu$ has a convergent amplitude if all its quasilocal subgraphs $G^i_k$ have strictly positive superficial degree of divergence $\omega(G^i_k)$ (see Theorem \ref{thm-powcount}). Then, for a general graph $(G,\mu)$, the subgraphs to renormalize are the quasilocal ones $G^i_k$ such that $\omega(G^i_k)\leq 0$ (and with at least one loop).
\begin{itemize}
\item In dimension $D=4$, they correspond to quasilocal subgraphs $G_k^i$ with $N(G^i_k)=2,4$ external legs.
\item In dimension $D=3$, they correspond to quasilocal subgraphs $G_k^i$ with $N(G^i_k)=2$ external legs and $n(G^i_k)=1,2$. There are only 4 (topologically different) such subgraphs.
\item In dimension $D=2$, they correspond to quasilocal subgraphs $G_k^i$ with $N(G^i_k)=2$ external legs and $n(G^i_k)=1$, which in fact correspond to only one subgraph called the ``tadpole''.
\item In dimension $D=0,1$, there is no subgraph to renormalize.
\end{itemize}
In the proof of Theorem \ref{thm-powcount}, we have seen that the amplitude $A_G$ \eqref{eq-ampl} of a graph $G$ contains the overall delta function \eqref{eq-overalldelta}, so that we define $B_G$ as
\begin{equation*}
A_G(k_1,\dots,k_{N-1},k_N)=\int \prod_{\ell\in\caI(H)}\dd^Du_\ell\ B_G(k_1,\dots,k_{N-1},k_N,u_\ell)\delta\Big(\sum_{j=1}^Nk_j-\sum_{\ell\in\caI(H)}u_\ell\Big).
\end{equation*}
Then we note $\tau$ the {\bf Taylor operator} acting on amplitudes: if $\omega(G)\leq 0$,
\begin{equation*}
\tau A_G(k_1,\dots,k_N):=\sum_{j=0}^{-\omega(G)}\frac{1}{j!}\frac{\dd^j}{\dd t^j}\int \prod_{\ell\in\caI(H)}\dd^Du_\ell\ B_G(tk_1,\dots,tk_{N-1},k_N,u_\ell)\delta\Big(\sum_{j=1}^Nk_j-t\sum_{\ell\in\caI(H)}u_\ell\Big)|_{t=0}.
\end{equation*}
The action of $\tau$ on the amplitudes will allow to isolate the divergences and to renormalize them.

\begin{theorem}
\label{thm-renorm}
Let $(G,\mu)$ be a connected graph of the theory \eqref{eq-com-actharm} with an attribution. For any quasilocal subgraph $H=G^i_k$ with $\omega(H)\leq 0$, 
\begin{itemize}
\item The counterterm $\tau\caA_{H}^\mu(\phi):=\int\dd^Dk_1\dots\dd^Dk_N\ \tau A_{H}^\mu(k_1,\dots,k_N)\hat\phi(k_1)\dots\hat\phi(k_N)$ is a functional of the form of \eqref{eq-com-actharm} in the field $\phi$ (but divergent for $i\to\infty$ together with $\rho\to\infty$).
\item The rest $(1-\tau)\caA_{H}^\mu(\phi)$ is convergent for $i\to\infty$.
\end{itemize}
\end{theorem}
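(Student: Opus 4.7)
The plan is to handle the two claims independently. For the first, I would compute the Taylor expansion in $t$ of the integrand $B_H(tk_1,\dots,tk_{N-1},k_N,u_\ell)\,\delta(\sum_e k_e - t\sum_\ell u_\ell)$ term by term up to order $-\omega(H)$. Each $\partial_t$ brings down either an external momentum $k_e$ contracted with a momentum-derivative of $B_H$, or a factor $-\sum_\ell u_\ell$ contracted with a $k$-derivative of the residual delta. After Gaussian integration over the internal $u_\ell$'s, each factor $\sum_\ell u_\ell$ produces a numerical coefficient times a derivative of $\delta(\sum_e k_e)$, which by Fourier inversion corresponds to a power of the spatial coordinate $x$; each factor of $k_e$ translates to a derivative $\partial_\mu\phi$. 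The parity $\phi\to-\phi$ of the action forces odd orders to vanish. For $D=4$: the case $N(H)=4$, $j=0$ renormalizes $\lambda$; the case $N(H)=2$, $j=0$ renormalizes $m^2$; the case $N(H)=2$, $j=2$ splits into a $(\partial_\mu\phi)^2$ counterterm (both derivatives on $B_H$) and an $x^2\phi^2$ counterterm (both derivatives on the delta function). The cases $D=2,3$ are handled identically and produce only a subset of these operators, so $\tau\caA_H^\mu(\phi)$ is a functional of the form \eqref{eq-com-actharm}.

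For the second claim, I would use the integral form of the Taylor remainder
\begin{equation*}
(1-\tau)F(1) = \frac{1}{(-\omega(H))!}\int_0^1(1-s)^{-\omega(H)}\,\partial_s^{-\omega(H)+1}F(s)\,ds,
\end{equation*}
where $F(s)$ is the $s$-rescaled integrand of the definition of $\tau$. Each of the $-\omega(H)+1$ extra derivatives brings down a factor of an external $k_e$ or an internal $\sum_\ell u_\ell$. Set $i(H)=\min\{i_\ell:\ell\in\caI(H)\}$ and $e(H)=\max\{i_\ell:\ell\text{ external to }H\}$; by definition of a quasilocal subgraph one has $e(H)<i(H)$. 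Inserted into the bound of Lemma \ref{lem-prop}, each extra $|u_\ell|$ is absorbed by the Gaussian $e^{-kM^{2(i_\ell+1)}u_\ell^2}$ at cost at most $M^{-(i(H)+1)}$, while each extra $|k_e|$ is absorbed by replacing $\hat\varphi_e$ with one of its derivatives, effectively at scale $M^{-(e(H)+1)}$. Altogether the remainder gains a factor $M^{(\omega(H)-1)(i(H)-e(H))}$ per quasilocal subgraph, which combined with the power-counting factor $M^{-\omega(H)}$ of Theorem \ref{thm-powcount} yields a bound at most $M^{-(i(H)-e(H))}$, summable over attributions by the techniques of Theorem \ref{thm-powcount2}.

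The principal technical obstacle is the absence of translation invariance: the residual delta \eqref{eq-overalldelta} depends on the internal $u_\ell$'s, forcing the Taylor operator to act on both external and internal data, and the counterterm genuinely contains the non-translation-invariant $x^2\phi^2$ piece alongside the usual mass, wave-function and coupling counterterms. One must verify carefully that, under the bounds $-\omega(H)\leq 2$ and $N(H)\leq 4$ valid for $D\leq 4$, the Taylor expansion produces no operator outside \eqref{eq-com-actharm}: the operator $x^\mu\partial_\mu\phi\cdot\phi$ is excluded by parity, and higher-order operators such as $x^4\phi^2$ or $\phi^6$ by the degree constraint. Once this is checked, summability of the remainder is a routine adaptation of standard multiscale techniques \cite{Rivasseau:1991,VignesTourneret:2006xa}.
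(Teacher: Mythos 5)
Your strategy is essentially the paper's: an interpolation parameter that simultaneously rescales the external momenta in $B_H$ and the internal sum $\sum_\ell u_\ell$ in the residual delta \eqref{eq-overalldelta}, a Taylor expansion up to order $-\omega(H)$, identification of factors of $k_e$ with $\partial_\mu\phi$ and of derivatives of the residual delta with powers of the position $x$ (this is exactly how the paper arrives at the form $\alpha_1+\alpha_2k^2+\alpha_3\partial_k^2$ and hence at the $x^2\phi^2$ counterterm), and an integral-remainder estimate in which each extra derivative costs a power of $M^{-i}$; your bookkeeping $M^{(\omega(H)-1)(i(H)-e(H))}$ combined with the power counting is consistent with the paper's observation that the terms $U_\nu\partial_\nu\delta$ and $D(u_\ell,U_\ell+V_\ell,K_\ell,s)\delta$ bring at least $M^{-\omega(H)-i}$, and the summation over attributions is indeed the routine part.

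There is, however, a genuine flaw in your justification that no operator outside \eqref{eq-com-actharm} is produced. The parity $\phi\to-\phi$ only constrains the number of external fields, which is already even, so it cannot kill the odd Taylor orders; and this matters, because the first-order term in which the derivative hits the residual delta would generate a counterterm proportional to $\int\dd^Dx\ x_\mu\,\phi^2(x)$, which is even in $\phi$ yet not of the form \eqref{eq-com-actharm}. What actually makes the odd orders vanish is the invariance of $\hat C^i(p,q)$ under $(p,q)\mapsto(-p,-q)$ (the $x\to-x$ symmetry of the quadratic form): after solving the delta along a spanning tree, flipping all $u_\ell$ and all loop $v_\ell$ leaves the product of sliced propagators at $s=0$ invariant while the odd-order integrands change sign, so their internal integrals vanish. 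Similarly, the mixed second-order term (one derivative on $B_H$, one on the delta) is of the type $k_\mu\partial_{k_\mu}$, i.e. $\int\dd^Dx\ x_\mu\phi\partial_\mu\phi$; it is even under both parities, so it is not ``excluded by parity'' as you assert — it is harmless only because $\int\dd^Dx\ x_\mu\phi\partial_\mu\phi=-\frac D2\int\dd^Dx\ \phi^2$, so it is absorbed into the mass counterterm. With these two corrections your first bullet is established and the rest of your argument matches the paper's proof.
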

\begin{proof}
\begin{itemize}
\item We have seen that only quasilocal subgraphs with $N=2$ or $N=4$ external legs have negative superficial degree of divergence. Let us start with the case of $N=4$ (only needed in $D=4$ dimensions). We consider $(G,\mu)$ a connected graph with an attribution and $H=G^i_k$ a quasilocal subgraph with $N=4$ external legs so that $\omega(H)=0$. In the notations of the above subsection,
\begin{equation*}
\caA_H^\mu(\phi)=\int \dd^Dk_1\dots\dd^Dk_N\,\hat\phi(k_1)\dots\hat\phi(k_N)\prod_{\ell\in\caI(H)}\dd^Dp_\ell\dd^Dq_\ell\ \hat C^{i_\ell}(p_\ell,q_\ell)\ \Delta(\{p_\ell,q_\ell,k_j\}).
\end{equation*}
By the proof of Theorem \ref{thm-powcount}, we can choose a spanning tree $\caT$ of $H$, introduce variables $u_\ell,v_\ell$, and evaluate the $\Delta$-function: for any line of the tree $\ell\in\caI(\caT)$, the variable $v_\ell$ can be written in terms of the other variables:
\begin{equation*}
v_\ell=V_\ell+U_\ell+K_\ell,\text{ where }V_\ell=\sum_{\ell'\in\caI(H)\setminus\caT}\alpha_{\ell,\ell'} v_{\ell'},\quad U_\ell=\sum_{\ell'\in\caI(H)}\beta_{\ell,\ell'} u_{\ell'},\quad K_\ell=\sum_{j=1}^N\gamma_{\ell,j} k_{j},
\end{equation*}
where $\alpha,\beta,\gamma$ are some coefficients. Moreover, the overall delta function \eqref{eq-overalldelta} remains. It gives:
\begin{multline*}
\caA_H^\mu(\phi)=K\int \prod_{j=1}^N\dd^Dk_j\,\hat\phi(k_j)\prod_{\ell\in\caI(H)\setminus\caT}\dd^Du_\ell\dd^Dv_\ell\ \tilde C^{i_\ell}(u_\ell,v_\ell) \prod_{\ell\in\caT}\dd^Du_\ell\ \tilde C^{i_\ell}(u_\ell,V_\ell+U_\ell+K_\ell)\\
\delta(\sum_{j=0}^N k_j-\sum_{\ell\in \caI(H)}u_\ell),
\end{multline*}
where $K$ is a constant and we write for simplicity $\tilde C^{i_\ell}(u_\ell,v_\ell)=\hat C^{i_\ell}(\frac12(u_\ell+v_\ell),\frac12(u_\ell-v_\ell))$. It can be reexpressed as
\begin{multline}
\caA_H^\mu(\phi)=K\int \prod_{j=1}^N\dd^Dk_j\,\hat\phi(k_j)\prod_{\ell\in\caI(H)\setminus\caT}\dd^Du_\ell\dd^Dv_\ell\ \tilde C^{i_\ell}(u_\ell,v_\ell)\\
\prod_{\ell\in\caT}\dd^Du_\ell\ \tilde C^{i_\ell}(u_\ell,V_\ell+U_\ell+sK_\ell)
\delta(\sum_{j=0}^N k_j-s\sum_{\ell\in \caI(H)}u_\ell)|_{s=1}.\label{eq-expampl}
\end{multline}
We use now the Taylor expansion on the variable $s$: $f(1)=f(0)+\int_0^1\frac{\partial}{\partial s}f(s)\dd s$. For $s=0$, we recognize exactly the expression $\tau\caA_H^\mu(\phi)$, which is just the amplitude taken for vanishing external impulsions since $\omega(H)=0$. So, the counterterm has the form
\begin{equation*}
\tau\caA_H^\mu(\phi)=K(\mu)\int \prod_{j=1}^4\dd^Dk_j\,\hat\phi(k_j)\delta(k_1+\dots+k_4)=K'(\mu)\int\dd^Dx\ \phi(x)^4,
\end{equation*}
while the rest is given by
\begin{multline*}
(1-\tau)\caA_H^\mu(\phi)=K\int \prod_{j=1}^4\dd^Dk_j\,\hat\phi(k_j)\prod_{\ell\in\caI(H)\setminus\caT}\dd^Du_\ell\dd^Dv_\ell\ \tilde C^{i_\ell}(u_\ell,v_\ell)\int_0^1\dd s \prod_{\ell\in\caT}\dd^Du_\ell\\
 \tilde C^{i_\ell}(u_\ell,V_\ell+U_\ell+sK_\ell)
\left(-U_\nu\partial_\nu\delta\Big(\sum_{j=0}^4 k_j-sU\Big)+D(u_\ell,U_\ell+V_\ell,K_\ell,s)\delta\Big(\sum_{j=0}^4 k_j-sU\Big)\right).
\end{multline*}
where $U=\sum_{\ell\in \caI(H)}u_\ell$ and $\frac{\partial}{\partial s}\tilde C^{i_\ell}(u_\ell,U_\ell+V_\ell+sK_\ell)= D(u_\ell,U_\ell+V_\ell,K_\ell,s) \tilde C^{i_\ell}(u_\ell,U_\ell+V_\ell+sK_\ell)$. By an analysis similar to the one leading to Equation \eqref{eq-ampl1}, we can observe that the term in $U_\nu\partial_\nu\delta$ brings at least $M^{-\omega(H)-i}=M^{-i}$ and norms of the type $\norm\hat\phi\norm_1$, $\norm\hat\phi\norm_2$, $\norm\partial_\nu \hat\phi\norm_2$, $\norm k_\nu\hat\phi\norm_2$. The second term, in $D(u_\ell,U_\ell+V_\ell,K_\ell,s)\delta$ brings also at least $M^{-i}$ and other norms like $\norm k_\nu k_\sigma\hat\phi\norm_2$. This concludes the second assertion of the Theorem for the case $N=4$ since $M^{-i}\to 0$.

\item We focus now on the case of $N=2$ external legs with $\omega\leq 0$, in dimension $D=2,3,4$, whose amplitude is given by \eqref{eq-expampl}. As before, we perform a Taylor expansion on the variable $s$ but at the third order: $f(1)=f(0)+f'(0)+\frac12f''(0)+\frac12\int_0^1(1-s)^2f^{(3)}(0)\dd s$. The first three terms of the expansion coincide with $\tau \caA_H^\mu(\phi)$ which takes the form
\begin{equation*}
\tau \caA_H^\mu(\phi)=K\int\dd^Dk\ \hat\phi(k)\Big(\alpha_1(\mu)+\alpha_2(\mu)k^2+\alpha_3(\mu)\partial_k^2\Big)\hat\phi(-k)
\end{equation*}
after calculations, or by the Parseval-Plancherel theorem,
\begin{equation*}
\tau \caA_H^\mu(\phi)=K'\int\dd^Dx\ \phi(x)\Big(\alpha_1'(\mu)+\alpha_2'(\mu)\partial_x^2+\alpha_3'(\mu)x^2\Big)\phi(x).
\end{equation*}
The fourth term, the rest $(1-\tau)\caA_H^\mu(\phi)$ is convergent when $i\to\infty$, by the same analysis as before.
\end{itemize}
\end{proof}

Due to Theorem \ref{thm-renorm} and Theorem \ref{thm-powcount2}, BPHZ renormalization can be directly performed by defining recursively the renormalized amplitudes, e.g.
\begin{equation*}
\caA_H^R:=(1-\tau)\caA_H(\phi)
\end{equation*}
if $H$ does not contain subgraphs with negative degree of divergence $\omega$; and by using the Zimmermann forest formula \cite{Zimmermann:1969}. See also chapter 1 of \cite{VignesTourneret:2006xa} for a good overview on this procedure. Thus, we have shown here that the theory \eqref{eq-com-actharm} is renormalizable at all orders in perturbation.

\section{Beta functions}
\label{sec-beta}

\subsection{Commutative theory}

In this subsection, we perform the calculation of the one-loop beta function of the model \eqref{eq-com-actharm} in $D=4$ dimensions. Due to the Feynman rules \eqref{eq-ampl}, we find the one-loop two point correlation function as:
\begin{equation}
\caA_2(x)=-12 \lambda C(x,x)\phi^2(x)=\frac{-3\lambda\Omega^2}{4\pi^2}\int_\eps^1\frac{\dd t}{4\Omega^2 t^2}(1-m^2t-\Omega^2t x^2)\phi^2(x),\label{eq-com-ampl2}
\end{equation}
where the integral on $t$ is regularized by a UV cut-off $\eps$ near 0, and up to finite contributions in $\eps\to 0$. It gives:
\begin{equation*}
\caA_2(x)=\frac{-3\lambda}{16\pi^2}\Big(\frac{1}{\eps}+m^2\ln(\eps)+\Omega^2 x^2\ln(\eps)\Big)\phi^2(x).
\end{equation*}

In the same way, the one-loop four point correlation function is:
\begin{equation*}
\caA_4(x)=288\lambda^2\int\dd y\ \phi(x)^2\phi(y)^2 C(x,y)^2
\end{equation*}
Therefore, with a Taylor expansion on the fields, it becomes:
\begin{equation*}
\caA_4(x)=288\lambda^2 \int\dd z\ \phi(x)^2 C(x,x+z)^2 (\phi(x)+z_\mu\partial_\mu\phi(x)+\frac12 z_\mu z_\nu\partial_\mu \partial_\nu\phi(x)+\dots)^2
\end{equation*}
Using the expression of the propagator \eqref{eq-propag3} regularized as above by a cut-off $\eps$, we obtain:
\begin{equation*}
\caA_4(x)=\frac{-9\lambda^2}{8\pi^2}\ln(\eps)\phi^4(x),
\end{equation*}
up to finite contributions.

Combining these contributions, we find that the one-loop effective action can be expressed as:
\begin{equation}
\Gamma_{1l}(\phi)= \int \dd^4x\Big(\frac 12(\partial_\mu\phi)^2 +\frac{\Omega^2}{2}(1+\frac{3\lambda\ln(\eps)}{8\pi^2})x^2\phi^2 +\frac{m^2}{2}(1+\frac{3\lambda}{8\pi^2 m^2\eps}+\frac{3\lambda\ln(\eps)}{8\pi^2})\phi^2 +\lambda(1+\frac{9\lambda\ln(\eps)}{8\pi^2})\phi^4\Big).
\label{eq-com-acteff}
\end{equation}
We express the one-loop beta functions with respect to the physical (renormalized) constants: $\lambda_R$, $m^2_R$, $\Omega_R$. Here the renormalization of the wave function is $Z=1$ because of the coefficient of $(\partial_\mu \phi)^2$ in \eqref{eq-com-acteff}. The gamma function (for the renormalization of the wave function) and the {\bf beta functions} of the different constants are:
\begin{align}
&\beta_\lambda :=\frac{\partial \lambda}{\partial(-\ln(\eps))}=\frac{9\lambda_R^2}{8\pi^2},\qquad \beta_\Omega :=\frac{\partial \Omega}{\partial(-\ln(\eps))}=\frac{3\lambda_R \Omega_R}{16\pi^2},\nonumber\\
&\beta_{m^2}:=\frac{1}{m^2}\frac{\partial m^2}{\partial(-\ln(\eps))}=\frac{3\lambda_R}{8\pi^2}-\frac{3\lambda_R}{8\pi^2 m^2_R\eps},\qquad \gamma:=\frac{\partial \ln(Z)}{\partial(-\ln(\eps))}=0. \label{eq-com-beta}
\end{align}

\subsection{Theory on the Moyal space}
\label{subsec-moyal}

We present here the beta function of the scalar field theory on the Moyal space with harmonic term given by
\begin{equation}
S(\phi)=\int \dd^4x\Big(\frac 12(\partial_\mu\phi)^2 +\frac{\Omega^2}{2}\wx^2\phi^2 +\frac{m^2}{2}\phi^2 +\lambda\phi\star\phi\star\phi\star\phi\Big),\label{eq-moy-actharm}
\end{equation}
where $\Theta$ is a non-degenerate skewsymmetric matrix, $\wx=2\Theta^{-1}x$, $\Omega$ is a real dimensionless parameter, and the Moyal product is given by
\begin{equation*}
(f\star g)(x)=\frac{1}{\pi^4\theta^4}\int \dd^4 y\dd^4 z\ f(y)g(z) e^{-iy\widetilde z-iz\widetilde x-ix\widetilde y}.
\end{equation*}
This theory is renormalizable to all orders \cite{Grosse:2004yu} for a non-zero $\Omega$. Note that this action has been interpreted as coming from a deformation of a superspace \cite{Bieliavsky:2010su,deGoursac:2011kv}. Its propagator is given by the Mehler kernel in the position space \cite{Gurau:2005qm}:
\begin{align}
C(x,y)&=\left(\frac{\Omega}{\pi\theta}\right)^{2}\int_0^\infty \!\! \frac{\dd t}{\sinh^{2}(\frac{4\Omega t}{\theta})} e^{-m^2t}C(t,x,y),\nonumber\\
C(t,x,y) &= \exp\Big(-\frac{\Omega}{2\theta}\coth(\frac{2\Omega t}{\theta})(x-y)^2 -\frac{\Omega}{2\theta}\tanh(\frac{2\Omega t}{\theta})(x+y)^2\Big)\label{eq-propag1}
\end{align}
which corresponds to \eqref{eq-propag3} up to a redefinition of the constant $\Omega$.

Owing to the computations of \cite{Grosse:2004by}, the one-loop effective action of the theory \eqref{eq-moy-actharm} can then be showed to be:
\begin{multline*}
\Gamma_{1l}(\phi)= \int \dd^4x\Big(\frac 12(1+\frac{\lambda\Omega^2\ln(\eps)}{4\pi^2(1+\Omega^2)^3})(\partial_\mu\phi)^2 +\frac{\Omega^2}{2}(1+\frac{\lambda\ln(\eps)}{4\pi^2(1+\Omega^2)^3})\wx^2\phi^2\\
+\frac{m^2}{2}(1+\frac{\lambda}{4\pi^2 m^2(1+\Omega^2)^2\eps}+\frac{\lambda\ln(\eps)}{4\pi^2(1+\Omega^2)^2})\phi^2 +\lambda(1+\frac{\lambda\ln(\eps)}{4\pi^2(1+\Omega^2)^2})\phi^{\star 4}\Big)
\end{multline*}
in the regularization scheme of expression \eqref{eq-com-ampl2}. Note that it does not correspond to the one chosen in \cite{Grosse:2004by}. We present here the results in this scheme in order to compare them with the beta functions of the commutative case and of the Moyal degenerate case, where a similar scheme is chosen. Due to the non-trivial contribution of the kinematic term $(\partial_\mu\phi)^2$, we have to perform the change of field $\phi=Z^{\frac12}\phi_R$. Since the effective action takes the form
\begin{equation*}
\Gamma_{1l}(\phi)= \int \dd^4x\Big(\frac 12(\partial_\mu\phi_R)^2 +\frac{\Omega_R^2}{2}\wx^2\phi_R^2+\frac{m_R^2}{2}\phi_R^2 +\lambda_R\phi_R^{\star 4}\Big),
\end{equation*}
in function of the physical (renormalized) constants: $\lambda_R$, $m^2_R$, $\Omega_R$, we deduce that the renormalization of the wave function is:
\begin{equation*}
Z=(1+\frac{\lambda\Omega^2\ln(\eps)}{4\pi^2(1+\Omega^2)^3})^{-1}.
\end{equation*}
By expanding the bare constants $\lambda$, $m^2$, $\Omega$ and $Z$ in terms of $\lambda_R$, we can compute the following beta functions:

\begin{align}
&\beta_\lambda=\frac{\lambda_R^2(1-\Omega_R^2)}{4\pi^2(1+\Omega_R^2)^3},\qquad
\beta_\Omega =\frac{\lambda_R \Omega_R(1-\Omega_R^2)}{8\pi^2(1+\Omega_R^2)^3},\nonumber\\
&\beta_{m^2}=\frac{\lambda_R}{4\pi^2(1+\Omega_R^2)^3}-\frac{\lambda_R}{4\pi^2 m^2_R(1+\Omega_R^2)^2\eps},\qquad\gamma=\frac{\lambda_R\Omega_R^2}{4\pi^2(1+\Omega_R^2)^3}.\label{eq-moy-beta}
\end{align}
We see that for $\Omega=1$, $\beta_\lambda=\beta_\Omega=0$. These results coincide with \cite{Grosse:2004by} up to the change of regularization scheme.

\subsection{Theory on the degenerate Moyal space}

Let now $\Theta$ be a skewsymmetric matrix degenerate in the two first coordinates. Each position $x\in \gR^4$ will be written as $(x_0,x_*)$ where $x_0=(x_1,x_2)\in\gR^2$ correspond to the two first coordinates, while $x_*=(x_3,x_4)\in\gR^2$ represent the third and the fourth ones. The associated star-product is given by:
\begin{equation*}
(f\star g)(x)=\frac{1}{\pi^2\theta^2}\int \dd y_*\dd z_*\ f(x_0,y_*)g(x_0,z_*) e^{-iy_*\wedge z_*-iz_*\wedge x_*-ix_*\wedge y_*}
\end{equation*}
where $y_*\wedge z_*=2y_* \Theta_*^{-1} z_*$ and $\Theta_*$ represents the non-degenerate part of $\Theta$. The action we want to consider here is
\begin{multline}
S(\phi)=\int \dd^4x\Big(\frac 12(\partial_\mu\phi)^2 +\frac{\Omega_0^2}{2}x_0^2\phi^2+\frac{\Omega_*^2}{2}\wx_*^2\phi^2 +\frac{m^2}{2}\phi^2 +\lambda\phi\star\phi\star\phi\star\phi\Big)\\
+\frac{\kappa^2}{\theta^2}\int \dd^2 x_0\dd^2 y_*\dd z_*\ \phi(x_0,y_*)\phi(x_0,z_*),\label{eq-deg-actharm}
\end{multline}
where $\wx_*=2\Theta_*^{-1}x_*$, $\Omega_0$ and $\Omega_*$ are respectively a dimensionful and a dimensionless parameter. The term with parameter $\kappa$ has been added for the renormalizability of the theory \cite{Grosse:2008df}. The corresponding propagator is
\begin{align}
C(x,y)&=\frac{\Omega_0\Omega_*}{8\pi^2\theta}\int_0^\infty \!\! \frac{\dd t}{\sinh(2\Omega_0 t)\sinh(\frac{4\Omega_* t}{\theta})} e^{-m^2t}C(t,x,y),\nonumber\\
C(t,x,y) &= \exp\Big(-\frac{\Omega_0}{4}\coth(\Omega_0 t)(x_0-y_0)^2 -\frac{\Omega_0}{4}\tanh(\Omega_0 t)(x_0+y_0)^2\nonumber\\
&\quad-\frac{\Omega_*}{2\theta} \coth(\frac{2\Omega_* t}{\theta})(x_*-y_*)^2 -\frac{\Omega_*}{2\theta}\tanh(\frac{2\Omega_* t}{\theta})(x_*+y_*)^2\Big).\label{eq-propag2}
\end{align}

In \cite{Grosse:2008df} it has been observed that only terms for $\kappa=0$ were involved in the renormalization of the wave function, the harmonic term, the mass term and the quartic interaction term. So, we compute the corresponding part of the one-loop effective action by setting $\kappa=0$. At the level of the planar regular part of the one-loop two point correlation function, it gives
\begin{equation*}
\caA_2(x)=\frac{-\lambda}{8\pi^2(1+\Omega_*^2)}\Big(\big(\frac{1}{\eps}+m^2\ln(\eps)+\Omega_0^2 \wx_0^2\ln(\eps) +\frac{\Omega_*^2}{1+\Omega_*^2} \wx_*^2\ln(\eps)\big)\phi^2(x)-\frac{\Omega_*^2}{1+\Omega_*^2}\ln(\eps)\phi(x)\partial^2_*\phi(x)\Big),
\end{equation*}
where $\partial^2_*=\frac{\partial^2}{\partial x_3^2}+\frac{\partial^2}{\partial x_4^2}$ and $\partial^2_0=\frac{\partial^2}{\partial x_1^2}+\frac{\partial^2}{\partial x_2^2}$.
The two-point graph with one loop and two broken faces gives also a divergent contribution, contrary to the fully noncommutative case, but it contributes only to the renormalization of the constant $\kappa$ \cite{Grosse:2008df}, so we do not compute it here. Then, the computation of the planar regular one-loop four point correlation function yields
\begin{equation*}
\caA_4(x)=\frac{-\lambda^2}{4\pi^2(1+\Omega_*^2)}\ln(\eps)\phi^{\star 4}(x),
\end{equation*}
up to finite contributions. The one-loop effective action can therefore be expressed as
\begin{multline*}
\Gamma_{1l}(\phi)= \int \dd^4x\Big(\frac{1}{2}(\partial_0\phi)^2+\frac 12(1+\frac{\lambda\Omega_*^2\ln(\eps)}{4\pi^2(1+\Omega_*^2)^2})(\partial_*\phi)^2 +\frac{\Omega_0^2}{2}(1+\frac{\lambda\ln(\eps)}{4\pi^2(1+\Omega_*^2)})\wx_0^2\phi^2\\
+\frac{\Omega_*^2}{2}(1+\frac{\lambda\ln(\eps)}{4\pi^2(1+\Omega_*^2)^2})\wx_*^2\phi^2
+\frac{m^2}{2}(1+\frac{\lambda}{4\pi^2 m^2(1+\Omega_*^2)\eps}+\frac{\lambda\ln(\eps)}{4\pi^2(1+\Omega_*^2)})\phi^2 +\lambda(1+\frac{\lambda\ln(\eps)}{4\pi^2(1+\Omega_*^2)})\phi^{\star 4}\Big).
\end{multline*}
We see that the Laplacian terms renormalize differently for the commutative coordinates and the noncommutative ones. We decide to introduce a coefficient $a$ in front of the commutative part of the Laplacian which has to be renormalized. Then we can define the wave function renormalization. But note that this procedure is not unique: we could have chosen to introduce a coefficient in front of the noncommutative terms. This problem cannot be avoided for the degenerate Moyal case.

We perform the change of fields $\phi=Z^{\frac12}\phi_R$. Since the effective action takes the form
\begin{equation*}
\Gamma_{1l}(\phi)= \int \dd^4x\Big(\frac{a}{2}(\partial_0\phi_R)^2 +\frac 12(\partial_*\phi_R)^2 +\frac{\Omega_{0,R}^2}{2}\wx_0^2\phi_R^2+\frac{\Omega_{*,R}^2}{2}\wx_*^2\phi_R^2+\frac{m_R^2}{2}\phi_R^2 +\lambda_R\phi_R^{\star 4}\Big),
\end{equation*}
in function of the physical (renormalized) constants: $\lambda_R$, $m^2_R$, $\Omega_{0,R}$, $\Omega_{*,R}$, the renormalization of the wave function is given by:
\begin{equation*}
Z=(1+\frac{\lambda\Omega_*^2\ln(\eps)}{4\pi^2(1+\Omega_*^2)^2})^{-1}.
\end{equation*}
We expand as before the bare constants $\lambda$, $m^2$, $\Omega_0$, $\Omega_*$ and $Z$ in terms of $\lambda_R$, and we obtain the beta functions:
\begin{align}
&\beta_\lambda =\frac{\lambda_R^2(1-\Omega_{*,R}^2)}{4\pi^2(1+\Omega_{*,R}^2)^2},\qquad
\beta_{\Omega_0} =\frac{\lambda_R \Omega_{0,R}}{8\pi^2(1+\Omega_{*,R}^2)^2},\qquad
\beta_{\Omega_*} =\frac{\lambda_R \Omega_{*,R}(1-\Omega_{*,R}^2)}{8\pi^2(1+\Omega_{*,R}^2)^2},\nonumber\\
&\beta_{m^2}=\frac{\lambda_R}{4\pi^2(1+\Omega_{*,R}^2)^2}-\frac{\lambda_R}{4\pi^2 m^2_R(1+\Omega_{*,R}^2)\eps},\qquad
\gamma=\frac{\lambda_R\Omega_{*,R}^2}{4\pi^2(1+\Omega_{*,R}^2)^2}.\label{eq-degen-beta}
\end{align}
These results have also been obtained in \cite{Grosse:2012my} for the case $\Omega_0=0$. But here, one need to treat also the commutative directions with a harmonic potential for the discussion, so to consider $\Omega_0\neq 0$.

\section{Discussion}
\label{sec-disc}

We have seen in section \ref{sec-ren} that the commutative scalar field theory with harmonic term \eqref{eq-com-actharm} has the same power counting \eqref{eq-supdeg} as the usual theory without harmonic term (Theorem \ref{thm-powcount}). One could have said that the mass dimension of the parameter $\Omega$ straighforwardly gave this result, but the usual argument based on mass dimension of the parameters is not valid if the quadratic part of the action (here \eqref{eq-com-actharm}) involves terms with both momenta $p^2$ and positions $x^2$. That is why it was important to check this power-counting. Moreover, Theorem \ref{thm-renorm} showed the quantum stability of the theory \eqref{eq-com-actharm}, i.e. that the harmonic term does not generate other terms than the one involved in the classical action. So the theory is renormalizable to all orders in $D=4$ dimensions, and superrenormalizable for $D\leq 3$ dimensions. In particular, this result stresses the interest of the analysis of \cite{Wulkenhaar:2009pv}, describing a modified Higgs mechanism for this theory from non-constant vacuum configurations. So far, it would be interesting to continue this analysis with respect to the results of the present paper, namely to investigate the quantum stability of the vacuum configurations exhibited in \cite{Wulkenhaar:2009pv}.
\medskip

The theory \eqref{eq-moy-actharm} with harmonic term on the Moyal space is known to have a ill-defined commutative limit $\theta\to 0$, for a fixed parameter $\Omega$. The simplest solution was then to assume that the harmonic term also vanishes $\Omega\to0$ when taking the commutative limit. In the present paper we can analyze what happens if $\frac{\Omega}{\theta}$ is fixed to a certain value $\Omega_{\text{com}}>0$ during the commutative limit. Is it well defined, $\Omega$ and $\Omega_{\text{com}}$ have to run in a similar way.
\begin{itemize}
\item In the commutative case \eqref{eq-com-beta}, $\Omega_{\text{com}}$ does not have any effect on the flow of the other constants. The renormalization of the wave function does not take place at one-loop like without the harmonic term.
\item Contrary to this, in the Moyal case \eqref{eq-moy-beta}, the parameter $\Omega$ has a strong effect on the flow of the coupling constant $\lambda$ due to the non-vanishing renormalization of the wave function. $\Omega=1$ is a fixed point of the renormalization group and for this value, $\beta_\lambda=0$ \cite{Grosse:2004by,Disertori:2006nq}, so that the theory is asymptotically safe.
\end{itemize}
We see therefore that this commutative limit, preserving $\frac{\Omega}{\theta}=\Omega_{\text{com}}$, is not compatible with the renormalization flow. In particular, the commutative scalar field theory is not asymptotically safe. Notice that the property of asymptotic safety appears as soon as there are noncommutative directions, as shown by the degenerate Moyal case \eqref{eq-degen-beta}. The parameter of selfduality is then $\Omega_*$ of these noncommutative directions. Another consequence of this analysis is that the kinetic and harmonic terms are renormalized in a different way for the commutative and noncommutative directions.

One of the advantage of considering the scalar theory with harmonic term on the Moyal space is therefore this property of asymptotic safety, which could permit to define it at a constructive level. The vacuum configurations for negative mass term ($m^2<0$) have been investigated in \cite{deGoursac:2007uv}. These non-constant solutions may share some interesting features with the one examined in \cite{Wulkenhaar:2009pv}. In view of a modified Higgs mechanism, one could also look at the quantum stability of the vacuum solutions of \cite{deGoursac:2007uv} as well as at their asymptotic properties.

\vskip 1 true cm

{\bf Acknowledgements}: The author thanks Vincent Rivasseau and Fabien Vignes-Tourneret for interesting discussions on multiscale analysis and on this work. He also acknowledges the Max Planck Institut f\"ur Mathematik (Bonn) for its invitation.

\bibliographystyle{utcaps}
\bibliography{biblio-these,biblio-perso,biblio-recents}

\end{document}